\documentclass[conference,letterpaper]{IEEEtran}
\usepackage[cmex10]{amsmath}
\usepackage{amsthm}
\usepackage{mathtools}
\usepackage{amssymb}
\usepackage{dsfont}
\usepackage{xcolor}
\usepackage{float}
\usepackage{verbatim}
\usepackage{multirow}
\usepackage[maxbibnames=99,style=ieee,sorting=nyt,citestyle=numeric-comp]{biblatex}

\bibliography{bibliography.bib}

\title{On Algebraic Constructions of Neural Networks with Small Weights}
\author{%
   \IEEEauthorblockN{\textbf{Kordag Mehmet Kilic}, \textbf{Jin Sima} and \textbf{Jehoshua Bruck}}
   \IEEEauthorblockA{Electrical Engineering, California Institute of Technology, USA, \texttt{\{kkilic,jsima,bruck\}@caltech.edu}
   }
 }
\date{}

\newtheorem{theorem}{Theorem}
\newtheorem{corollary}{Corollary}[theorem]
\newtheorem{lemma}{Lemma}

\newtheorem{definition}{Definition}

\newcommand{\LT}{\mathcal{L}\mathcal{T}}

\usepackage{tikz}[border=2mm]
\usepackage{float}
\usetikzlibrary{shapes.geometric}
\AtBeginBibliography{\small}
\interdisplaylinepenalty=2500 

\hyphenation{op-tical net-works semi-conduc-tor}

\tikzset{
    ->,
    gate/.style={draw=black,fill=#1,minimum width=6mm,circle},
    square/.style={regular polygon,regular polygon sides=4},
    every pin edge/.style={draw=black}
}

\begin{document}

\maketitle

\begin{abstract}
    Neural gates compute functions based on weighted sums of the input variables. The expressive power of neural gates (number of distinct functions it can compute) depends on the weight sizes and, in general, large weights (exponential in the number of inputs) are required. Studying the trade-offs among the weight sizes, circuit size and depth is a well-studied topic both in circuit complexity theory and the practice of neural computation. We propose a new approach for studying these complexity trade-offs by considering a related algebraic framework. Specifically, given a single linear equation with arbitrary coefficients, we would like to express it using a system of linear equations with smaller (even constant) coefficients. The techniques we developed are based on Siegel’s Lemma for the bounds, anti-concentration inequalities for the existential results and extensions of Sylvester-type Hadamard matrices for the constructions.
    
    We explicitly construct a constant weight, optimal size matrix to compute the EQUALITY function (checking if two integers expressed in binary are equal). Computing EQUALITY with a single linear equation requires exponentially large weights. In addition, we prove the existence of the best-known weight size (linear) matrices to compute the COMPARISON function (comparing between two integers expressed in binary). In the context of the circuit complexity theory, our results improve the upper bounds on the weight sizes for the best-known circuit sizes for EQUALITY and COMPARISON.
\end{abstract}
\section{Introduction}
\label{sec:intro}

An $n$-input Boolean function is a mapping from $\{0,1\}^n$ to $\{0,1\}$. In other words, it is a partitioning of $n$-bit binary vectors into two sets with labels $0$ and $1$. In general, we can use systems of linear equations as descriptive models of these two sets of binary vectors. For example, the solution set of the equation $\sum_{i=1}^n x_i = k$ is the $n$-bit binary vectors $X = (x_1,\dots, x_n)$ where each $x_i \in \{0,1\}$ and $k$ is the number of $1$s in the vectors. We can ask three important questions: How expressive can a single linear equation be? How many equations do we need to describe a Boolean function?   Could we simulate a single equation by a system of equations with smaller integer weights?

Let us begin with an example: $3$-input PARITY function where we label binary vectors with odd number of 1s as 1. We can write it in the following form:
\begin{equation}
    \label{eq:parity_3}
    \text{PARITY}(X) = \mathds{1}\Big\{(2^2 x_3 + 2^1 x_2 + 2^0 x_1) \in \{1,2,4,7\} \Big\}
\end{equation}
where $\mathds{1}\{.\}$ is the indicator function with outputs 0 or 1. We express the binary vectors as integers by using binary expansions. Thus, it can be shown that if the weights are exponentially large in $n$, we can express all Boolean functions in this form.

Now, suppose that we are only allowed to use a single equality check in an indicator function. Considering the $3$-input PARITY function, we can simply obtain
\begin{equation}
    \label{eq:exp_construction}
    \begin{bmatrix}
        2^0 & 2^1 & 2^2 \\
        2^0 & 2^1 & 2^2 \\
        2^0 & 2^1 & 2^2 \\
        2^0 & 2^1 & 2^2
    \end{bmatrix}
    \begin{bmatrix}
        x_1 \\
        x_2 \\
        x_3
    \end{bmatrix}
    = \begin{bmatrix}
        1 \\
        2 \\
        4 \\
        7
    \end{bmatrix}
\end{equation}

None of the above equations can be satisfied if $X$ is labeled as 0. Conversely, if $X$ satisfies one of the above equations, we can label it as $1$. For an arbitrary Boolean function of $n$ inputs, if we list every integer associated with vectors labeled as $1$, the number of rows may become exponentially large in $n$. Nevertheless, in this fashion, we can compute this function by the following system of equations using smaller weights.
\begin{equation}
    \label{eq:parity_best}
    \begin{bmatrix}
        1 & 1 & 1 \\
        1 & 1 & 1
    \end{bmatrix}
    \begin{bmatrix}
        x_1 \\
        x_2 \\
        x_3
    \end{bmatrix}
    = \begin{bmatrix}
        1 \\
        3
    \end{bmatrix}
\end{equation}

Not only there is a simplification on the number of equations, but also the weights are reduced to smaller sizes. This phenomenon motivates the following question with more emphasis: For which Boolean functions could we obtain such simplifications in the number of equations and weight sizes? For PARITY, this simplification is possible because it is a \textit{symmetric} Boolean function, i.e., the output depends only on the number of $1$s of the input $X$. We are particularly interested in such simplifications from large weights to small weights for a class of Boolean functions called \textit{threshold functions}. Note that we use the word ``large'' to refer to exponentially large quantities in $n$ and the word ``small'' to refer to polynomially large quantities (including $O(n^0) = O(1)$) in $n$.

\subsection{Threshold Functions and Neural Networks}
Threshold functions are commonly treated functions in Boolean analysis and machine learning as they form the basis of neural networks. Threshold functions compute a weighted summation of binary inputs and feed it to a \textit{threshold} or \textit{equality} check. If this sum is fed to the former, we call the functions \textit{linear threshold functions} (see \eqref{eq:lt}). If it is fed to the latter, we call them \textit{exact threshold functions} (see \eqref{eq:elt}). We can write an $n$-input threshold function using the indicator function where $w_i$s are integer weights and $b$ is a \textit{bias} term.
\begin{align}
    \label{eq:lt}
    f_{\LT} (X) &= \mathds{1}\Big\{\sum_{i=1}^n w_i x_i \geq b\Big\} \\
    \label{eq:elt}
    f_{\mathcal{E}} (X) &= \mathds{1}\Big\{\sum_{i=1}^n w_i x_i = b\Big\}
\end{align}

A device computing the corresponding threshold function is called a \textit{gate} of that type. To illustrate the concept, we define COMPARISON (denoted by COMP) function which computes whether an $n$-bit integer $X$ is greater than another $n$-bit integer Y. The exact counterpart of it is defined as the EQUALITY (denoted by EQ) function which checks if two $n$-bit integers are equal (see Figure \ref{fig:comp_and_eq}). For example, we can write the EQ function in the following form.
\begin{equation}
\label{eq:equality}
\text{EQ}(X,Y) = \mathds{1}\Big\{\sum_{i=1}^n 2^{i-1}(x_i-y_i) = 0 \Big\}
\end{equation}

\begin{figure}[H]
    \begin{tikzpicture}
    \tikzstyle{sum} = [gate=white,label=center:+]
    
    \tikzstyle{input} = [circle]
    
    \newcommand{\nodenum}{3}
    \newcommand{\eq}{=} 
    \pgfmathsetmacro{\offset}{\nodenum}
    \node[gate=white,label=center:$\LT$,pin=right:COMP] (lt) at (2,-\nodenum-0.5) {};

    \foreach \x in {1,...,\nodenum}
    {   
        \pgfmathsetmacro{\exponent}{int(\x-1)}
        \node[input,label=180:$x_\x$] (xo-\x) at (0,-\x) {};
        \draw (xo-\x) -- node[above,pos=0.3] {$2^{\exponent}$} (lt);
    }
    \foreach \y in {1,...,\nodenum}
    {   
        \pgfmathsetmacro{\exponent}{int(\y-1)}
        \node[input,label=180:$y_\y$] (yo-\y) at (0,-\y - \offset) {};
        \draw (yo-\y) -- node[above,pos=0.25] {$-2^{\exponent}$} (lt);
    }
    
    \pgfmathsetmacro{\offsetx}{4.6}
    \node[gate=white,label=center:$\mathcal{E}$,pin=right:EQ] (e) at (2+\offsetx,-\nodenum-0.5) {};
    \foreach \x in {1,...,\nodenum}
    {   
        \pgfmathsetmacro{\exponent}{int(\x-1)}
        \node[input,label=180:$x_\x$] (xo-\x) at (\offsetx,-\x) {};
        \draw (xo-\x) -- node[above,pos=0.3] {$2^{\exponent}$} (e);
    }
    \foreach \y in {1,...,\nodenum}
    {   
        \pgfmathsetmacro{\exponent}{int(\y-1)}
        \node[input,label=180:$y_\y$] (yo-\y) at (\offsetx,-\y - \offset) {};
        \draw (yo-\y) -- node[above,pos=0.25] {$-2^{\exponent}$} (e);
    }
\end{tikzpicture}
\caption{The $3$-input COMP and EQ functions for integers $X$ and $Y$ computed by linear threshold and exact threshold gates. A gate with an $\LT$(or $\mathcal{E}$) inside is a linear (or exact) threshold gate. More explicitly, we can write $\text{COMP}(X,Y) = \mathds{1}\{4x_3 + 2x_2 + x_1 \geq 4y_3 + 2y_2 + y_1\}$ and $\text{EQ}(X,Y) = \mathds{1}\{4x_3 + 2x_2 + x_1 = 4y_3 + 2y_2 + y_1\}$}
\label{fig:comp_and_eq}
\end{figure}
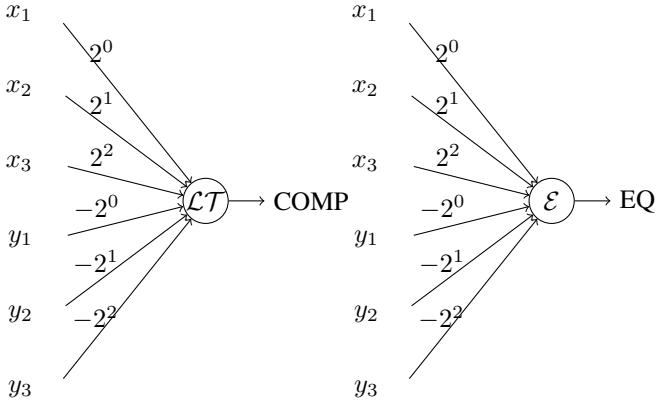

In general, it is proven that the weights of a threshold function can be represented by $O(n\log{n})$-bits and this is tight \cite{alon1997anti,babai2010weights, haastad1994size,muroga1971threshold}. However, it is possible to construct ``small'' weight \textit{threshold circuits} to compute any threshold function \cite{amano2005complexity,goldmann1993simulating,hofmeister1996note, siu1991power}. This transformation from a circuit of depth $d$ with exponentially large weights in $n$ to another circuit with polynomially large weights in $n$ is typically within a constant factor of depth (e.g. $d+1$ or $3d + 3$ depending on the context) \cite{goldmann1993simulating,vardi2020neural}. For instance, such a transformation would simply follow if we can replace any ''large`` weight threshold function with ``small'' weight depth-$2$ circuits so that the new depth becomes $2d$.

It is possible to reduce polynomial size weights into constant weights by replicating the gates that is fed to the top gate recursively (see Figure \ref{fig:replicate}). Nevertheless, this would inevitably introduce a polynomial size blow-up in the circuit size. We emphasize that our focus is to achieve this weight size reduction from polynomial weights to constant weights with at most a constant size blow-up in the circuit size.

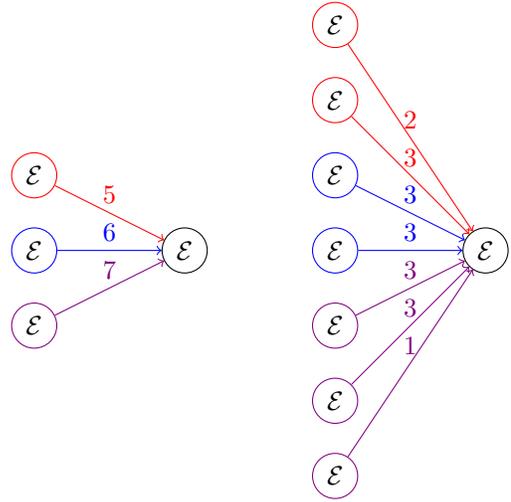
\begin{figure}
    \centering
    \begin{tikzpicture}
        \tikzstyle{exa} = [gate=white,label=center:$\mathcal{E}$]
        \node[exa,draw=red] (i-1) at (0,1) {};
        \node[exa,draw=blue] (i-2) at (0,0) {};
        \node[exa,draw=violet] (i-3) at (0,-1) {};
        \node[exa] (o-1) at (2, 0) {};
        \draw[draw=red] (i-1) -- node[above,color=red] {$5$} (o-1);
        \draw[draw=blue] (i-2) -- node[above,color=blue] {$6$}(o-1);
        \draw[draw=violet] (i-3) -- node[above,color=violet] {$7$}(o-1);
        
        \pgfmathsetmacro{\offset}{4}
        
        \node[exa,draw=red] (i-11) at (\offset,3) {};
        \node[exa,draw=red] (i-12) at (\offset,2) {};
        \node[exa,draw=blue] (i-21) at (\offset,1) {};
        \node[exa,draw=blue] (i-22) at (\offset,0) {};
        \node[exa,draw=violet] (i-31) at (\offset,-1) {};
        \node[exa,draw=violet] (i-32) at (\offset,-2) {};
        \node[exa,draw=violet] (i-33) at (\offset,-3) {};
        
        \node[exa] (o-11) at (\offset + 2, 0) {};
        
        \draw[draw=red] (i-11) -- node[above,color=red] {$2$} (o-11);
        \draw[draw=red] (i-12) -- node[above,color=red] {$3$} (o-11);
        \draw[draw=blue] (i-21) -- node[above,color=blue] {$3$}(o-11);
        \draw[draw=blue] (i-22) -- node[above,color=blue] {$3$}(o-11);
        \draw[draw=violet] (i-31) -- node[above,color=violet] {$3$}(o-11);
        \draw[draw=violet] (i-32) -- node[above,color=violet] {$3$}(o-11);
        \draw[draw=violet] (i-33) -- node[above,color=violet] {$1$}(o-11);
    \end{tikzpicture}
    \caption{An example of a weight transformation for a single gate (in black) to construct constant weight circuits. Different gates are colored in red, blue, and violet and depending on the weight size, each gate is replicated a number of times. In this example, each weight in this construction is at most 3.}
    \label{fig:replicate}
\end{figure}

For neural networks and learning tasks, the weights are typically finite precision real numbers. To make the computations faster and more power efficient, the weights can be quantized to small integers with a loss in the accuracy of the network output \cite{jacob2018quantization, hubara2016binarized}. In practice, given that the neural circuit is large in size and deep in depth, this loss in the accuracy might be tolerated. We are interested in the amount of trade-off in the increase of size and depth while using as small weights as possible. More specifically, our goal is to provide insight to the computation of single threshold functions with large weights using threshold circuits with small weights by relating theoretical upper and lower bounds to the best known constructions.

In this manner, for the ternary weight case ($\{-1,0,1\}$ as in \cite{li2016ternary}), we give an explicit and optimal size circuit construction for the EQ function using depth-2 circuits. This optimality is guaranteed by achieving the theoretical lower bounds asymptotically up to vanishing terms \cite{kilic2021neural,roychowdhury1994lower}. We also prove an existential result on the COMP constructions to reduce the weight size on the best known constructions. It is not known if constant weight constructions exist without a polynomial blow-up in the circuit size and an increase in the depth for arbitrary threshold functions.

\subsection{Bijective Mappings from Finite Fields to Integers}
It seems that choosing powers of two as the weight set is important. In fact, we can expand any arbitrary weight in binary and use powers of two as a fixed weight set for any threshold function \cite{kilic2021neural}. This weight set is a choice of convenience and a simple solution for the following question: How small could the elements of $\mathcal{W} = \{w_1, w_2, \cdots, w_n\}$ be if $\mathcal{W}$ has all distinct subset sums (DSS)? If the weights satisfy this property, called the \textbf{DSS property}, they can define a bijection between $\{0,1\}^n$ and integers. Erd\H{o}s conjectured that the largest weight $w \in \mathcal{W}$ is upper bounded by $c_0 2^n$ for some $c_0 > 0$ and therefore, choosing powers of two as the weight set is asymptotically optimal. The best known result for such weight sets yields $0.22002 \cdot 2^{n}$ and currently, the best lower bound is $\Omega(2^n/\sqrt{n})$ \cite{bohman1998construction,dubroff2021note,guy2004unsolved}. 

Now, let us consider the following linear equation where the weights are fixed to the ascending powers of two but $x_i$s are not necessarily binary. We denote the powers of two by the vector $w_b$.
\begin{equation}
    \label{eq:w_b}
    w_b^T x = \sum_{i=1}^n 2^{i-1} x_i = 0
\end{equation}
As the weights of $w_b$ define a bijection between $n$-bit binary vectors and integers, $w_b^T x = 0$ does not admit a non-trivial solution for the alphabet $\{-1,0,1\}^n$. This is a necessary and sufficient condition to compute the EQ function given in \eqref{eq:equality}. We extend this property to $m$ many rows to define \textit{\textup{EQ} matrices} which give a bijection between $\{0,1\}^n$ and $\mathbb{Z}^m$. Thus, an EQ matrix can be used to compute the EQ function in \eqref{eq:equality} and our goal is to use smaller weight sizes in the matrix.

\begin{definition}
A matrix $A \in \mathbb{Z}^{m\times n}$ is an $\textup{EQ matrix}$ if the homogeneous system $Ax = 0$ has no non-trivial solutions in $\{-1,0,1\}^n$.
\end{definition}

Let $A \in \mathbb{Z}^{m\times n}$ be an EQ matrix with the weight constraint $W \in \mathbb{Z}$ such that $|a_{ij}| \leq W$ for all $i,j$ and let $R$ denote the \textit{rate} of the matrix $A$, which is $n/m$. It is clear that any full-rank square matrix is an EQ matrix with $R = 1$. Given any $W$, how large can this $R$ be? For the maximal rate, a necessary condition can be proven by Siegel's Lemma \cite{siegel2014einige}.

\begin{lemma}[Siegel's Lemma (modified)]
Consider any integer matrix $A \in \mathbb{Z}^{m \times n}$ with $m < n$ and $|a_{ij}| \leq W$ for all $i,j$ and some integer $W$. Then, $Ax = 0$ has a non-trivial solution for an integer vector $x \in \mathbb{Z}^n$ such that $||x||_\infty \leq (\sqrt{n}W)^\frac{m}{n-m}$.
\end{lemma}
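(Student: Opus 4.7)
The plan is to apply Minkowski's first theorem from the geometry of numbers to the integer kernel lattice $L = \ker A \cap \mathbb{Z}^n$, bounding its determinant via Hadamard's inequality on the Gram matrix $AA^T$.

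First, without loss of generality, assume $A$ has full row rank $m$; otherwise one may delete linearly dependent rows, which only enlarges the solution set. Let $V := \ker A \subset \mathbb{R}^n$, a subspace of dimension $n - m$, and let $L = V \cap \mathbb{Z}^n$, which is a lattice of full rank $n - m$ inside $V$. A standard computation via the Smith normal form of $A$ gives the identity $\det(L) \cdot d = \sqrt{\det(AA^T)}$, where $d$ is the gcd of the $m \times m$ minors of $A$; in particular $\det(L) \leq \sqrt{\det(AA^T)}$.

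Next, I would apply Hadamard's inequality to $AA^T$ to get $\sqrt{\det(AA^T)} \leq \prod_{i=1}^m \|a_i\|_2 \leq (\sqrt{n}W)^m$, since each row $a_i$ of $A$ has entries bounded by $W$. Setting $R = (\sqrt{n}W)^{m/(n-m)}$ and considering the symmetric convex body $C = [-R, R]^n \cap V$ inside $V$, Vaaler's cube-slicing inequality gives $\mathrm{vol}_V(C) \geq (2R)^{n-m} = 2^{n-m}(\sqrt{n}W)^m \geq 2^{n-m}\det(L)$. Minkowski's first theorem then produces a non-zero lattice point $x \in L$ with $\|x\|_\infty \leq R$, which is the claimed bound.

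The hardest technical inputs are Vaaler's cube-slicing theorem (guaranteeing that any $(n-m)$-dimensional central section of the cube $[-R,R]^n$ has $(n-m)$-volume at least $(2R)^{n-m}$) and the kernel-lattice determinant identity. A more elementary but weaker route is pure pigeonhole: among the $(K+1)^n$ vectors in $\{0, 1, \ldots, K\}^n$, two must collide under $x \mapsto Ax$ once $K + 1 > (nW)^{m/(n-m)}$, since each row yields at most $KnW + 1$ possible integer values using $\|a_i\|_1 \leq nW$; this loses a factor of $\sqrt{n}$ relative to the stated bound, suggesting the authors' ``modified'' version invokes the geometric-of-numbers route rather than pure pigeonhole.
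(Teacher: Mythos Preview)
The paper does not actually prove this lemma: it states Siegel's Lemma as a known result with a citation and moves on, so there is no in-paper argument to compare yours against. That said, your proposal is a correct proof of the stated bound, and it is worth noting why the particular form with $\sqrt{n}$ (rather than $n$) matters for the paper's later use in Lemma~\ref{lem:upper}.

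Your identification of the two routes is accurate. The elementary pigeonhole argument you sketch at the end yields only $(nW)^{m/(n-m)}$, which would weaken Lemma~\ref{lem:upper} to $R \leq \log n + 1$ and would make the construction of Theorem~\ref{th:constr} optimal only up to a factor of~$2$ rather than up to vanishing terms. The sharper $\sqrt{n}$ bound genuinely requires the geometry-of-numbers machinery you invoke: the kernel-lattice determinant bound $\det(L) \leq \sqrt{\det(AA^T)}$, Hadamard's inequality on the rows of $A$, Vaaler's cube-slicing theorem for the volume of $[-R,R]^n \cap \ker A$, and Minkowski's first theorem. This is essentially the Bombieri--Vaaler refinement of Siegel's original pigeonhole proof, and your chain of inequalities is set up correctly (in particular, compactness of the sliced cube lets you invoke Minkowski at the equality threshold $\mathrm{vol}_V(C) \geq 2^{n-m}\det(L)$).

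One small point: the identity $\det(L)\cdot d = \sqrt{\det(AA^T)}$ with $d$ the gcd of the maximal minors is correct via Smith normal form, but for the argument you only need the inequality $\det(L) \leq \sqrt{\det(AA^T)}$, so you could streamline by stating just that.
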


It is shown that if $m = o(n/\log{nW})$, then any $A \in \mathbb{Z}^{m \times n}$ with weight constraint $W$ admits a non-trivial solution in $\{-1,0,1\}^n$ and cannot be an EQ matrix, i.e., $R = O(\log{nW})$ is tight \cite{kilic2021neural}. A similar result can be obtained by the matrix generalizations of Erd\H{o}s' Distinct Subset Sum problem \cite{costa2021variations}.

When $m = O(n/\log{nW})$, the story is different. If $m = O(n/\log{n})$, there exists a matrix $A \in \{-1,1\}^{m \times n}$ such that every non-trivial solution of $Ax = 0$ satisfies $\max_{j} |x_j| \geq c_0 \sqrt{n}^\frac{m}{n-m}$ for a positive constant $c_0$. This is given by Beck's Converse Theorem on Siegel's Lemma \cite{beck2017siegel}.

For an explicit construction, it is possible to achieve the optimal rate $R = O(\log{nW})$ if we allow $W = poly(n)$. This can be done by the Chinese Remainder Theorem (CRT) and the Prime Number Theorem (PNT) \cite{amano2005complexity,hofmeister1996note,kilic2021neural}. It is known that CRT can be used to define \textit{residue codes} \cite{mandelbaum1972error}. For an integer $x$ and modulo base $p$, we denote the modulo operation by $[x]_p$, which maps the integer to a value in $\{0,...,p-1\}$. Suppose $0 \leq Z < 2^n$ for an integer $Z$. One can encode this integer by the $m$-tuple $(d_1,d_2,\cdots,d_m)$ where $[Z]_{p_i} = d_i$ for a prime number $p_i$. Since we can also encode $Z$ by its binary expansion, the CRT gives a bijection between $\mathbb{Z}^m$ and $\{0,1\}^n$ as long as $p_1 \cdots p_m > 2^n$. By taking modulo $p_i$ of Equation \eqref{eq:w_b}, we can obtain the following matrix, defined as a \textit{CRT matrix}:

\begin{align}
\label{eq:ex_crt}
&\begin{bmatrix*}[l]
        [2^0]_{3} & [2^1]_{3} & [2^2]_{3} & [2^4]_{3} & [2^5]_{3} & [2^6]_3 & [2^7]_3 \\
        [2^0]_{5} & [2^1]_{5} & [2^2]_{5} & [2^4]_{5} & [2^5]_{5} & [2^6]_5 & [2^7]_5 \\
        [2^0]_{7} & [2^1]_{7} & [2^2]_{7} & [2^4]_{7} & [2^5]_{7} & [2^6]_7 & [2^7]_7 \\
        [2^0]_{11} & [2^1]_{11} & [2^2]_{11} & [2^4]_{11} & [2^5]_{11} & [2^6]_{11} & [2^7]_{11}
        \end{bmatrix*} \\
        &\hphantom{aaaaaaaaaa}= 
        \begin{bmatrix*}[r]
        1 & 2 & 1 & 2 & 1 & 2 & 1 & 2 \\
        1 & 2 & 4 & 3 & 1 & 2 & 4 & 3 \\
        1 & 2 & 4 & 1 & 2 & 4 & 1 & 2 \\
        1 & 2 & 4 & 8 & 5 & 10 & 9 & 7
        \end{bmatrix*}_{4 \times 8}
\end{align}

We have $Z < 256$ since $n=8$ and $3\cdot 5\cdot 7\cdot 11 = 1155$. Therefore, this CRT matrix is an EQ matrix. In general, by the PNT, one needs $O(n/\log{n})$ many rows to ensure that $p_1 \cdots p_m > 2^n$. Moreover, $W$ is bounded by the maximum prime size $p_m$, which is $O(n)$ again by the PNT. 

However, it is known that constant weight EQ matrices with asymptotically optimal rate exist by Beck's Converse Theorem on Siegel's Lemma \cite{beck2017siegel,kilic2021neural}. In this paper, we give an explicit construction where $W = 1$ and asymptotic efficiency in rate is achieved up to vanishing terms. It is in fact an extension of Sylvester-type Hadamard matrices.

\begin{equation}
    \label{eq:eq_4x8}
    \begin{bmatrix*}[r]
        1 &  1 &  1 &  1 &  1 &  1 & 1 & 0 \\
        1 & -1 &  0 &  1 & -1 &  0 & 0 & 1 \\ 
        1 &  1 &  1 & -1 & -1 & -1 & 0 & 0 \\
        1 & -1 &  0 & -1 &  1 &  0 & 0 & 0
    \end{bmatrix*}_{4\times 8}
\end{equation}

One can verify that the matrix in \eqref{eq:eq_4x8} is an EQ matrix and its rate is twice the trivial rate being the same in \eqref{eq:ex_crt}. Therefore, due to the optimality in the weight size, this construction can replace the CRT matrices in the constructions of EQ matrices.

We can also focus on $q$-ary representation of integers in a similar fashion by extending the definition of EQ matrices to this setting. In our analysis, we always treat $q$ as a constant value.

\begin{definition}
A matrix $A \in \mathbb{Z}^{m\times n}$ is an $\textup{EQ}_q\textup{ matrix}$ if the homogeneous system $Ax = 0$ has no non-trivial solutions in $\{-q+1,\dots,q-1\}^n$.
\end{definition}

If $q = 2$, then we drop $q$ from the notation and say that the matrix is an EQ matrix. For the $\text{EQ}_q$ matrices, the optimal rate given by Siegel's Lemma is still $R = O(\log{nW})$ and constant weight constructions exist. We give an extension of our construction to $\text{EQ}_q$ matrices where $W = 1$ and asymptotic efficiency in rate is achieved up to constant terms.

\subsection{Maximum Distance Separable Extensions of EQ Matrices}
Residue codes are treated as Maximum Distance Separable (MDS) codes because one can extend the CRT matrix by adding more prime numbers to the matrix (without increasing $n$) so that the resulting integer code achieves the Singleton bound in Hamming distance \cite{tay2015non}. However, we do not say a CRT matrix is an \textit{MDS matrix} as this refers to another concept.

\begin{definition}
An integer matrix $A \in \mathbb{Z}^{m \times n}$ ($m \leq n$) is \textup{MDS} if and only if no $m \times m$ submatrix $B$ is singular.
\end{definition}

\begin{definition}
\label{def:rmds}
An integer matrix $A \in \mathbb{Z}^{rm \times n}$ is \textup{MDS for $q$-ary bijections with MDS rate $r$ and EQ rate $R = n/m$} if and only if every $m \times n$ submatrix $B$ is an $\textup{EQ}_q$ matrix.
\end{definition}

Because Definition \ref{def:rmds} considers solutions over a \textbf{restricted} alphabet, we denote such matrices as $\textit{RMDS}_q$. Remarkably, as $q \rightarrow \infty$, both MDS definitions become the same. Similar to the $\text{EQ}_q$ definition, we drop the $q$ from the notation when $q = 2$. A CRT matrix is not MDS, however, it can be $\text{RMDS}_q$.

We can demonstrate the difference of both MDS definitions by the following matrix. This matrix is an RMDS matrix with EQ rate $2$ and MDS rate $5/4$ because any $4 \times 8$ submatrix is an EQ matrix. This is in fact the same matrix in \eqref{eq:ex_crt} with an additional row with entries $[2^{i-1}]_{13}$ for $i \in \{1,...,8\}$.

\begin{align}
    \begin{bmatrix*}[r]
        1 & 2 & 1 & 2 & 1 & 2 & 1 & 2 \\
        1 & 2 & 4 & 3 & 1 & 2 & 4 & 3 \\
        1 & 2 & 4 & 1 & 2 & 4 & 1 & 2 \\
        1 & 2 & 4 & 8 & 5 & 10 & 9 & 7 \\
        1 & 2 & 4 & 8 & 3 & 6 & 12 & 11
        \end{bmatrix*}_{5 \times 8}
\end{align}

Here, the determinant of the $5\times5$ submatrix given by the first five columns is 0. Thus, this matrix is not an MDS matrix.

In this work, we are interested in $\text{RMDS}_q$ matrices. We show that for a constant weight $\text{RMDS}_q$ matrix, the MDS rate bound $r = O(1)$ is tight. We also provide an existence result for such $\text{RMDS}_q$ matrices where the weight size is bounded by $O(r)$ given that the EQ rate is $O(\log{n})$.

The following is a summary of our contributions in this paper.

\begin{itemize}
    \item In Section \ref{sec:constr}, we explicitly give a rate-efficient $\text{EQ}$ matrix construction with constant entries $\{-1,0,1\}$ where the optimality is guaranteed up to vanishing terms. This solves an open problem in \cite{kilic2021neural} and \cite{roychowdhury1994lower}.
    \item In Section \ref{sec:rmds}, we prove that the MDS rate $r$ of an $\text{RMDS}_q$ matrix with entries from an alphabet $\mathcal{Q}$ with cardinality $k$ should satisfy $r \leq k^{k+1}$. Therefore, constant weight $\text{RMDS}_q$ matrices can at most achieve $r = O(1)$.
    \item In Section \ref{sec:rmds}, provide an existence result for $\text{RMDS}_q$ matrices given that $W = O(r)$ and the optimal EQ rate $O(\log{n})$. In contrast, the best known results give $W = O(rn)$ with the optimal EQ rate $O(\log{n})$.
    \item In Section \ref{sec:neural}, we apply our results to Circuit Complexity Theory to obtain better weight sizes with asymptotically \textbf{no trade-off} in the circuit size for the depth-2 EQ and COMP constructions, as shown in the Table \ref{tab:cct}.
\end{itemize}

\begin{table}[h!]
\caption{Results for the Depth-2 Circuit Constructions}
\label{tab:cct}
\centering
\begin{tabular}{|c||c|c||c|c|}
     \hline
     \multirow{2}{*}{\textbf{Function}} & \multicolumn{2}{|c||}{\textbf{This Work}} & \multicolumn{2}{|c|}{\textbf{Previous Works}} \\ 
     \cline{2-3}\cline{4-5} & Weight Size & Constructive & Weight Size & Constructive \\
     \hline\hline
     \multirow{2}{*}{EQ} & \multirow{2}{*}{$O(1)$} & \multirow{2}{*}{Yes} & $O(1)$\cite{kilic2021neural} & No \\
     \cline{4-5} & & & $O(n)$\cite{roychowdhury1994lower} & Yes \\
     \hline
     COMP & $O(n)$ & No & $O(n^2)$\cite{amano2005complexity} & Yes \\
     \hline
\end{tabular}
\end{table}

\section{Rate-efficient Constructions with Constant Alphabet Size}
\label{sec:constr}

The $m\times n$ EQ matrix construction we give here is based on Sylvester's construction of Hadamard matrices. It is an extension of it to achieve higher rates with the trade-off that the matrix is no longer full-rank.

\begin{theorem}
\label{th:constr}
Suppose we are given an EQ matrix $A_0 \in \{-1,0,1\}^{m_0\times n_0}$. At iteration $k$, we construct the following matrix $A_k$:
\begin{equation}
    A_k = \begin{bmatrix*}[c]
        A_{k-1} &  A_{k-1} & I_{m_{k-1}} \\
        A_{k-1} & -A_{k-1} &  0 
    \end{bmatrix*}
\end{equation}
$A_k$ is an EQ matrix with $m_k = 2^k m_0$, $n_k = 2^k n_0 (\frac{k}{2}\frac{m_0}{n_0} + 1)$ for any integer $k \geq 0$.
\end{theorem}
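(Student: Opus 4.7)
The plan is a straightforward induction on $k$, with the base case $k=0$ being the hypothesis that $A_0$ is an EQ matrix. First I would handle the dimension bookkeeping separately: the recursion visibly gives $m_k = 2m_{k-1}$ and $n_k = 2n_{k-1} + m_{k-1}$, and a quick check shows that $n_k = 2^k n_0 + 2^{k-1} k \, m_0$ satisfies this recurrence, which is exactly the closed form $2^k n_0(\tfrac{k}{2}\tfrac{m_0}{n_0} + 1)$ claimed in the statement.

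For the EQ property at step $k$, assume inductively that $A_{k-1}$ is EQ and let $x \in \{-1,0,1\}^{n_k}$ satisfy $A_k x = 0$. I would partition $x = (u,v,w)$ so that $u, v \in \{-1,0,1\}^{n_{k-1}}$ correspond to the two $A_{k-1}$ column blocks and $w \in \{-1,0,1\}^{m_{k-1}}$ corresponds to the $I_{m_{k-1}}$ block. Writing out the two block rows gives
\begin{align*}
A_{k-1} u + A_{k-1} v + w &= 0, \\
A_{k-1} u - A_{k-1} v &= 0.
\end{align*}

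The key step, which I expect to be the crux, is a parity argument: adding these two equations yields $2 A_{k-1} u + w = 0$, so every entry of $w$ is even. Since $w \in \{-1,0,1\}^{m_{k-1}}$, this forces $w = 0$. Substituting back gives $A_{k-1} u = 0$, and subtracting the two block equations analogously (or using the second equation directly together with $w=0$) gives $A_{k-1} v = 0$. By the inductive hypothesis applied separately to $u$ and $v$, both are zero, hence $x = 0$, completing the induction.

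The main obstacle, conceptually, is seeing that the identity block on the top-right is what enforces the parity divisibility: without it, the $\pm$ block structure alone would only yield $A_{k-1}(u-v) = 0$ with $u-v \in \{-2,\dots,2\}^{n_{k-1}}$, outside the alphabet required by the inductive hypothesis. The trick is that combining the two block equations decouples $u$ from $v$ and simultaneously shows $w$ must be twice an integer vector, which collapses $w$ to $0$ for the binary alphabet. This is also the reason the argument is clean for EQ matrices but will need to be modified when extending to $\mathrm{EQ}_q$ matrices later, since for $q > 2$ the entries of $w$ can be even and nonzero.
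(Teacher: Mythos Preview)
Your proof is correct and essentially identical to the paper's: both use induction on $k$, partition $x$ into three blocks aligned with the column blocks of $A_k$, combine the two block-row equations to obtain $2A_{k-1}u + w = 0$, use the parity constraint to force $w=0$, and then invoke the inductive hypothesis on $u$ and $v$ separately. The dimension verification via the recurrence $n_k = 2n_{k-1} + m_{k-1}$ is also what the paper has in mind when it says ``for $n_k$, we can apply induction.''
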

\begin{proof}
We will apply induction. The case $k = 0$ is trivial by assumption. For the system $A_kx = z$, let us partition the vector $x$ and $z$ in the following way:
\begin{equation}
    \label{eq:constr_partition}
    \begin{bmatrix*}[c]
        A_{k-1} &  A_{k-1} & I_{m_{k-1}} \\
        A_{k-1} & -A_{k-1} &  0 
    \end{bmatrix*}
    \begin{bmatrix*}
        x^{(1)} \\
        x^{(2)} \\
        x^{(3)}
    \end{bmatrix*}
    = \begin{bmatrix*}
        z^{(1)} \\
        z^{(2)}
    \end{bmatrix*}
\end{equation}
Then, setting $z = 0$, we have $A_{k-1}x^{(1)} = A_{k-1}x^{(2)}$ by the second row block. Hence, the first row block of the construction implies $2A_{k-1}x^{(1)} + x^{(3)} = 0$. Each entry of $x^{(3)}$ is either $0$ or a multiple of $2$. Since $x_i \in \{-1,0,1\}$, we see that $x^{(3)} = 0$. Then, we obtain $A_{k-1}x^{(1)} = A_{k-1}x^{(2)} = 0$. Applying the induction hypothesis on $A_{k-1}x^{(1)}$ and $A_{k-1}x^{(2)}$, we see that $A_kx = 0$ admits a unique trivial solution in $\{-1,0,1\}^{n_k}$.

To see that $m_k = 2^k m_0$ is not difficult. For $n_k$, we can apply induction.
\end{proof}

To construct the matrix given in \eqref{eq:eq_4x8}, we can use $A_0 = I_1 = \begin{bmatrix}1\end{bmatrix}$, which is a trivial rate EQ matrix, and take $k = 2$. By rearrangement of the columns, we also see that there is a $4\times 4$ Sylvester-type Hadamard matrix in this matrix.


For the sake of completeness, we will revisit Siegel's Lemma to prove the best possible rate one can obtain for an EQ matrix with weights $\{-1,0,1\}$ as similarly done in \cite{kilic2021neural}. We note that the following lemma gives the best possible rate upper bound to the best of our knowledge and our construction asymptotically shows the sharpness of Siegel's Lemma similar to Beck's result.

\begin{lemma}
\label{lem:upper}
For any $\{-1,0,1\}^{m\times n}$ EQ matrix, $R = \frac{n}{m} \leq \frac{1}{2}\log{n} + 1$.
\end{lemma}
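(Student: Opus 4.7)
The plan is to apply Siegel's Lemma with weight constraint $W = 1$ and then combine its conclusion with the defining property of an EQ matrix, namely that the smallest non-trivial integer solution of $Ax = 0$ must have infinity norm at least $2$ (since no non-trivial solution is permitted in $\{-1,0,1\}^n$). These two ingredients give matching upper and lower bounds on $\|x\|_\infty$, and squeezing them yields the claimed rate bound.

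First, I would dispose of the trivial regime $m \geq n$, where $R = n/m \leq 1 \leq \tfrac{1}{2}\log n + 1$ for any $n \geq 2$, so the claim is immediate. In the non-trivial regime $m < n$, Siegel's Lemma applied to $A$ with $W = 1$ produces a non-trivial integer vector $x \in \mathbb{Z}^n$ satisfying
\begin{equation*}
    \|x\|_\infty \leq (\sqrt{n})^{m/(n-m)} = n^{m/(2(n-m))}.
\end{equation*}
Because $A$ is an EQ matrix, this $x$ cannot lie in $\{-1,0,1\}^n$, so some coordinate of $x$ has absolute value at least $2$, i.e.\ $\|x\|_\infty \geq 2$. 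Chaining the two inequalities gives $2 \leq n^{m/(2(n-m))}$.

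Taking $\log_2$ of both sides yields $2(n-m) \leq m \log_2 n$, which rearranges to $2n \leq m(\log_2 n + 2)$, and then to $R = n/m \leq \tfrac{1}{2}\log_2 n + 1$, exactly as stated. The argument is essentially a single chained inequality, so there is no substantial technical obstacle; the main conceptual point is simply noticing that the weight restriction $W = 1$ (the hypothesis of Siegel's Lemma) and the $\{-1,0,1\}^n$ alphabet restriction (the EQ definition) provide the precisely matching upper and lower bounds on $\|x\|_\infty$ that pin down the optimal rate. The same strategy extends verbatim to $\mathrm{EQ}_q$ matrices by replacing the lower bound $2$ with $q$, and this explains the asymptotic sharpness of the construction in Theorem~\ref{th:constr}.
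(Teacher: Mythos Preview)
Your proof is correct and follows essentially the same approach as the paper: apply Siegel's Lemma with $W=1$ and combine the resulting upper bound on $\|x\|_\infty$ with the fact that an EQ matrix forces any non-trivial integer solution to have $\|x\|_\infty \geq 2$. Your version is in fact slightly cleaner than the paper's, which introduces an auxiliary parameter $\epsilon$ and passes to the limit $\epsilon \to 0$, whereas you obtain the inequality $2 \leq n^{m/(2(n-m))}$ directly from integrality and also explicitly handle the edge case $m \geq n$.
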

\begin{proof}
By Siegel's Lemma, we know that for a matrix $A \in \mathbb{Z}^{m\times n}$, the homogeneous system $Ax = 0$ attains a non-trivial solution in $\{-1,0,1\}^n$ when
\begin{equation}
    ||x||_\infty \leq (\sqrt{n}W)^{\frac{m}{n-m}} \leq 2^{1-\epsilon}
\end{equation}
for some $\epsilon > 0$. Then, since $W = 1$, we deduce that $
    R = \frac{n}{m} \geq \frac{1}{2(1-\epsilon)}\log{n} + 1
$. Obviously, an EQ matrix cannot obtain such a rate. Taking $\epsilon \rightarrow 0$, we conclude the best upper bound is $
    R \leq \frac{1}{2}\log{n} + 1
$.
\end{proof}

Using Lemma \ref{lem:upper} for our construction with $A_0 = \begin{bmatrix}1\end{bmatrix}$, we compute the upper bound on the rate and the real rate
\begin{align}
    R_{upper} &= \frac{k + 1 + \log{(k+2)}}{2},\hphantom{a} R_{constr} = \frac{k}{2}+1 \\
    \frac{R_{upper}}{R_{constr}} &= 1 + \frac{\log{(k+2)}-1}{k+2} \implies \frac{R_{upper}}{R_{constr}} \sim 1
\end{align}
which concludes that the construction is \textbf{optimal} in rate up to vanishing terms. By deleting columns, we can generalize the result to any $n \in \mathbb{Z}$ to achieve optimality up to a constant factor of $2$.

We conjecture that one can extend the columns of any Hadamard matrix to obtain an EQ matrix with entries $\{-1,0,1\}$ achieving rate optimality up to vanishing terms. In this case, the Hadamard conjecture implies a rich set of EQ matrix constructions.

Given $Ax = z \in \mathbb{Z}^m$, we note that there is a linear time decoding algorithm to find $x \in \{0,1\}^n$ uniquely, given in the appendix. This construction can also be generalized to $\text{EQ}_q$ matrices with the same proof idea. In this case, the rate is optimal up to a factor of $q$.

\begin{theorem}
\label{th:constr_q}
Suppose we are given an $\text{EQ}_q$ matrix $A_0 \in \{-1,0,1\}^{m_0\times n_0}$. At iteration $k$, we construct the following matrix $A_k$:
\begin{equation}
    \begin{bmatrix*}[c]
        A_{k-1} &  A_{k-1} &  A_{k-1} & \cdots  & A_{k-1} & I_{m_{k-1}} \\
        A_{k-1} & -A_{k-1} &  0 &\cdots & 0 & 0 \\
        0 &  A_{k-1} & -A_{k-1} & \cdots & 0 & 0 \\
        \vdots & \vdots & \vdots & \ddots &  \vdots & \vdots \\
        0 &  0 &  0 & \cdots & -A_{k-1} & 0
    \end{bmatrix*}
\end{equation}
$A_k$ is an $\text{EQ}_q$ matrix with $m_k = q^k m_0$, $n_k = q^k n_0 (\frac{k}{q}\frac{m_0}{n_0} + 1)$ for any integer $k \geq 0$.
\end{theorem}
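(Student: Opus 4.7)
The plan is to mirror the induction structure used in the proof of Theorem \ref{th:constr}, but now with $q$ copies of $A_{k-1}$ in the top row block and $q-1$ ``difference'' block rows chaining consecutive column blocks together. Formally, I would induct on $k$. The base case $k=0$ is immediate from the hypothesis that $A_0$ is an $\text{EQ}_q$ matrix. For the step, assume $A_{k-1}$ is $\text{EQ}_q$ and suppose $A_k x = 0$ for some $x \in \{-q+1,\dots,q-1\}^{n_k}$.

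I would partition $x = (x^{(1)},\dots,x^{(q)},x^{(q+1)})$ according to the column block structure, where $x^{(i)} \in \mathbb{Z}^{n_{k-1}}$ for $i \leq q$ and $x^{(q+1)} \in \mathbb{Z}^{m_{k-1}}$. Reading off the shifted block rows (rows $i+1$ for $i=1,\dots,q-1$) gives $A_{k-1} x^{(i)} - A_{k-1} x^{(i+1)} = 0$, so
\begin{equation}
    A_{k-1} x^{(1)} = A_{k-1} x^{(2)} = \cdots = A_{k-1} x^{(q)}.
\end{equation}
Substituting this common value into the top block row yields $q\, A_{k-1} x^{(1)} + x^{(q+1)} = 0$, i.e.\ $x^{(q+1)} = -q\, A_{k-1} x^{(1)}$. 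Each entry of $x^{(q+1)}$ is then a multiple of $q$ lying in $\{-q+1,\dots,q-1\}$, forcing $x^{(q+1)}=0$. Hence $A_{k-1} x^{(i)} = 0$ for every $i \leq q$, and the induction hypothesis applied to each $x^{(i)} \in \{-q+1,\dots,q-1\}^{n_{k-1}}$ gives $x^{(i)} = 0$. This shows $A_k$ is $\text{EQ}_q$.

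The dimension counts are routine: the recursion $m_k = q\,m_{k-1}$ is immediate from the block-row structure, and $n_k = q\,n_{k-1} + m_{k-1}$ follows from the column-block structure. Plugging in the induction hypothesis $n_{k-1} = q^{k-1} n_0 \left(\frac{k-1}{q}\frac{m_0}{n_0} + 1\right)$ and $m_{k-1} = q^{k-1} m_0$, a short algebraic simplification produces the claimed closed form $n_k = q^k n_0 \left(\frac{k}{q}\frac{m_0}{n_0} + 1\right)$.

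The only real content, as in Theorem \ref{th:constr}, is the divisibility trick: the factor $q$ created by summing $q$ identical copies of $A_{k-1} x^{(1)}$ pushes the identity-block variables outside the allowed alphabet $\{-q+1,\dots,q-1\}$ unless they vanish. I expect the main thing to verify carefully is that the chain of $q-1$ shifted difference blocks indeed forces all $q$ values $A_{k-1} x^{(i)}$ to coincide; once this is observed, the argument reduces cleanly to the $q=2$ case already handled.
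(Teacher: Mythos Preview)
Your proposal is correct and follows exactly the approach the paper intends: it explicitly says the construction ``can also be generalized to $\text{EQ}_q$ matrices with the same proof idea,'' and your chain-of-differences argument together with the divisibility-by-$q$ trick is precisely that generalization of the proof of Theorem~\ref{th:constr}. The dimension recursions $m_k = q\,m_{k-1}$ and $n_k = q\,n_{k-1} + m_{k-1}$ and their closed forms are also handled correctly.
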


\section{Bounds on the Rate and the Weight Size of $\text{RMDS}_q$ Matrices}
\label{sec:rmds}

Similar to a CRT-based RMDS matrix, is there a way to extend the EQ matrix given in Section \ref{sec:constr} without a large trade-off in the weight size? We give an upper bound on the MDS rate $r$ based on the alphabet size of an $\text{RMDS}_q$ matrix.

\begin{theorem}
\label{th:mds_lower_bound}
An $\text{RMDS}_q$ matrix $A \in \mathbb{Z}^{rm \times n}$ with entries from $\mathcal{Q} = \{q_1,...,q_k \}$ satisfies $r \leq k^{k+1}$ given that $n > k$.
\end{theorem}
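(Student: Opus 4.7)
The plan is to derive a contradiction from $r > k^{k+1}$ by exhibiting an $m \times n$ submatrix of $A$ that fails the $\text{EQ}_q$ property, using a two-stage pigeonhole argument: one stage on rows and one stage on entries of a projection vector.

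For the first stage, I would fix any $k+1$ columns of $A$, which is possible by the hypothesis $n > k$. Projecting each of the $rm$ rows onto these $k+1$ coordinates yields a map from rows into $\mathcal{Q}^{k+1}$, a set of cardinality $k^{k+1}$. Assuming $r > k^{k+1}$, the inequality $rm > (m-1)k^{k+1}$ holds, so by pigeonhole some projection vector $p = (p_1, \ldots, p_{k+1}) \in \mathcal{Q}^{k+1}$ is shared by at least $m$ distinct rows of $A$. Call the indices of the chosen columns $c_1, \ldots, c_{k+1}$ and the set of rows sharing the projection $\mathcal{S}$.

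For the second stage, I would apply pigeonhole again, this time to the vector $p$ itself: since $p$ has $k+1$ coordinates but $\mathcal{Q}$ has only $k$ symbols, two coordinates must coincide, say $p_i = p_j$ for some $i \neq j$. Then I would define the candidate kernel vector $x \in \mathbb{Z}^n$ by $x_{c_i} = 1$, $x_{c_j} = -1$, and $x_\ell = 0$ elsewhere. This $x$ is nonzero and lies in $\{-1,0,1\}^n \subseteq \{-q+1, \ldots, q-1\}^n$ (using $q \geq 2$). For every row $a \in \mathcal{S}$, a direct computation gives $a \cdot x = a_{c_i} - a_{c_j} = p_i - p_j = 0$. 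Selecting any $m$ rows from $\mathcal{S}$ to form an $m \times n$ submatrix $B$ yields $Bx = 0$ with $x$ a nontrivial element of $\{-q+1, \ldots, q-1\}^n$, so $B$ fails to be an $\text{EQ}_q$ matrix, contradicting Definition \ref{def:rmds}.

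The only real choice in this argument is the number of columns to project onto: I want at least $k+1$ to force a coincidence among the entries of $p$, and I want as few as possible to keep the pigeonhole bound $k^{k+1}$ small. The exponent $k+1$ in the theorem is precisely what this balance yields, and the hypothesis $n > k$ is exactly what is needed so that $k+1 \leq n$ columns are available. I do not foresee a significant obstacle; the argument is essentially a clean double pigeonhole, and no delicate counting or algebraic maneuvering is required.
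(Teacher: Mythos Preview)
Your proposal is correct and follows essentially the same double-pigeonhole argument as the paper: project the $rm$ rows onto $k+1$ fixed columns to force a block of at least $m$ rows sharing a common prefix $p\in\mathcal{Q}^{k+1}$, then use pigeonhole on the $k+1$ entries of $p$ to locate two equal coordinates and build the $\{-1,0,1\}$ kernel vector. The paper simply fixes the first $k+1$ columns and phrases the count as a direct bound on each block rather than as a contradiction, but the content is identical.
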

\begin{proof}
The proof is based on a simple counting argument. We rearrange the rows of a matrix in blocks $B_i$ for $i \in \{1,\cdots,k^{k+1}\}$ assuming $n > k$. Here, each block contains rows starting with a prefix of length $k+1$ in the lexicographical order of the indices, i.e.,
\begin{equation}
    A = \begin{bmatrix}
        B_1 \\
        B_2 \\
        \vdots \\
        B_{k^{k+1}}
    \end{bmatrix}
    = 
    \begin{bmatrix}
        q_1 & q_1 & \cdots & q_{1} & \cdots \\
        q_1 & q_1 & \cdots & q_{2} & \cdots \\
        \vdots   &  \vdots  & \ddots & \vdots & \ddots \\
        q_{k} & q_{k} & \cdots & q_{k} & \cdots
    \end{bmatrix}
\end{equation}

For instance, the block $B_1$ contains the rows starting with the prefix $\begin{bmatrix}
    q_1 & q_1 & \cdots & q_{1}
\end{bmatrix}_{1 \times k+1}$. It is easy to see that there is a vector $x \in \{-1,0,1\}^n$ that will make at least one of the $B_i x = 0$ because it is guaranteed that the $(k + 1)$-th column will be equal to the one of the preceding elements by Pigeonhole Principle.

Since the matrix is MDS, any $m$ row selections should be an EQ matrix. Therefore, any $B_i$ should not contain more than $m$ rows. Again, by Pigeonhole Principle, $rm \leq k^{k+1}m$ and consequently, $r \leq k^{k+1}$.
\end{proof}

The condition that $n > k$ is trivial in the sense that if the weights are constant, then $k$ is a constant. Therefore, the MDS rate can only be a constant due to Theorem \ref{th:mds_lower_bound}.

\begin{corollary}
An $\text{RMDS}_q$ matrix $A \in \mathbb{Z}^{rm \times n}$ weight size $W = O(1)$ can at most achieve the MDS rate $r = O(1)$.
\end{corollary}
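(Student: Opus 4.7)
The corollary is essentially an immediate consequence of Theorem \ref{th:mds_lower_bound}, so the plan is just to chain the two bounds together correctly and verify the hypothesis $n > k$ is harmless in the asymptotic regime.

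First, I would translate the constant weight hypothesis into a bound on the alphabet size. If every entry $a_{ij}$ satisfies $|a_{ij}| \leq W$ with $W = O(1)$, then the set $\mathcal{Q}$ of distinct entries appearing in $A$ is a subset of $\{-W, -W+1, \ldots, W\}$, so its cardinality $k$ satisfies $k \leq 2W + 1 = O(1)$. In particular, $k$ is bounded by an absolute constant independent of $n$.

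Next, I would invoke Theorem \ref{th:mds_lower_bound}. For the asymptotic statement we can assume $n$ is larger than this constant $k$, so the hypothesis $n > k$ of the theorem is satisfied. The theorem then yields $r \leq k^{k+1}$, and since $k$ is bounded by a constant, $k^{k+1}$ is likewise bounded by a constant; hence $r = O(1)$.

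There is no real obstacle here — the only thing worth being careful about is the asymptotic convention: the statement $r = O(1)$ implicitly ranges over $n \to \infty$, which is exactly the regime where $n > k$ holds automatically for any constant alphabet. The finite exceptional cases where $n \leq k$ are irrelevant to the asymptotic claim, so the corollary follows with no further work.
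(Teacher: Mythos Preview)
Your argument is correct and matches the paper's own reasoning exactly: the paper simply observes that $W=O(1)$ forces the alphabet size $k$ to be a constant, so the hypothesis $n>k$ is trivially met asymptotically and Theorem~\ref{th:mds_lower_bound} gives $r\le k^{k+1}=O(1)$. There is nothing to add.
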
 

In Section \ref{sec:constr}, we saw that CRT-based EQ matrix constructions are not optimal in terms of the weight size. For $\text{RMDS}_q$ matrices, we now show that the weight size can be reduced by a factor of $n$. The idea is to use the probabilistic method on the existence of $\text{RMDS}_q$ matrices.

\begin{lemma}
\label{lem:prob}
Suppose that $a \in \{-W,\dots,W\}^n$ is uniformly distributed and $x_i \in \{-q+1,\dots,q-1\} \setminus 0$ for $i \in \{1,\dots,n\}$ are fixed where $q$ is a constant. Then, for some constant $C$,
\begin{equation}
    Pr(a^T x = 0) \leq \frac{C}{\sqrt{n}W}
\end{equation}
\end{lemma}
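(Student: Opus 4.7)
The plan is to use the characteristic-function (Fourier inversion) approach, which is the standard route for Littlewood--Offord-type anti-concentration inequalities. Let $S := a^T x = \sum_{i=1}^n a_i x_i$. Since $S$ is integer-valued, Fourier inversion on the circle gives
\begin{equation}
    \Pr(S = 0) = \frac{1}{2\pi} \int_{-\pi}^{\pi} \phi_S(t)\, dt, \qquad \phi_S(t) = \prod_{i=1}^n \mathbb{E}\!\left[e^{i t x_i a_i}\right],
\end{equation}
where the product structure follows from independence of the $a_i$. Each factor is a normalized Dirichlet kernel:
\begin{equation}
    \mathbb{E}\!\left[e^{i t x_i a_i}\right] = \frac{1}{2W+1}\sum_{k=-W}^{W} e^{i t x_i k} = \frac{\sin\!\left((W+\frac{1}{2})\, t x_i\right)}{(2W+1)\sin(t x_i/2)}.
\end{equation}

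The next step is to establish two complementary bounds on each factor, using the hypotheses $|x_i|\geq 1$ (since $x_i$ is a nonzero integer) and $|x_i| \leq q-1 = O(1)$ (since $q$ is a constant). In the ``Gaussian regime'' $|t x_i W| \leq c_0$, which is ensured by $|t| \leq c_0/((q-1)W)$, a second-order Taylor expansion together with $\mathrm{Var}(a_i) = W(W+1)/3$ yields
\[
    \left|\mathbb{E}[e^{itx_ia_i}]\right| \leq 1 - c_1 W^2 t^2 x_i^2 \leq \exp(-c_1 W^2 t^2 x_i^2).
\]
Outside this window, the standard Dirichlet envelope $|\sin(Nu)/(N\sin u)| \leq 1/(N|\sin u|)$ combined with $|\sin(\theta/2)|\gtrsim|\theta|$ on $[-\pi,\pi]$ gives a per-factor decay of order $C/(W|t|)$.

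Combining these bounds and using $\sum_i x_i^2 \geq n$, the product in the Gaussian regime satisfies
\[
    |\phi_S(t)| \leq \exp\!\left(-c_1 W^2 t^2 \sum_{i=1}^n x_i^2\right) \leq \exp(-c_1 n W^2 t^2).
\]
Extending the integral to all of $\mathbb{R}$ and substituting $s = W\sqrt{n}\, t$ gives $\int \exp(-c_1 n W^2 t^2)\, dt = O\!\left(1/(W\sqrt{n})\right)$. The complementary tail $c_0/((q-1)W) < |t| \leq \pi$ is dominated by $(C/(W|t|))^n$, which for $n \geq 2$ integrates to a quantity that is $O(1/(\sqrt{n}W))$ after absorbing constants. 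Together the two regimes yield the claimed bound $\Pr(S = 0) \leq C/(\sqrt{n}W)$.

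The main obstacle is the bookkeeping at the interface between the Gaussian and the tail regimes: the valid $t$-range for the quadratic expansion depends on $|x_i|$, so one must work with the most restrictive common window of width $\Theta(1/W)$ and absorb the dependence on $q$ into the implicit constant $C$. Once this patching is handled, the integral estimates are routine; all of the essential content is the usual Esseen/Halász-style control of a Dirichlet-kernel product with $n$ factors of comparable ``spread'' $W$.
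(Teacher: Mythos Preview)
Your approach is genuinely different from the paper's: you use Fourier inversion and a Gaussian/tail split on the characteristic function, whereas the paper conditions on one coordinate $a_n$ to peel off the factor $1/(2W{+}1)$ and then applies the Berry--Esseen theorem to bound $\Pr\big(|\sum_{i<n}a_ix_i|\le W|x_n|\big)=O(1/\sqrt{n})$. The Berry--Esseen route uses heavier machinery but treats the $x_i$ uniformly with no case analysis on $t$.

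Your tail argument has a real gap for $q>2$. The inequality $|\sin(\theta/2)|\gtrsim|\theta|$ holds only for $|\theta|\le\pi$, but you apply it with $\theta=tx_i$, which ranges up to $(q-1)\pi$. Near a resonance $t_0$ with $t_0x_i\in 2\pi\mathbb{Z}$ for some $|x_i|\ge 2$ (e.g.\ $t_0=\pi$ when $x_i=2$) the corresponding Dirichlet factor equals $1$, so the claimed product bound $(C/(W|t|))^n$ fails; if all $x_i$ share one value the entire product is $1$ at such $t_0$. A secondary issue: at the inner tail edge $|t|=c_0/((q{-}1)W)$ the per-factor envelope is $\approx\pi(q{-}1)/(2c_0)>1$ (since $c_0$ must be small for the Taylor step), so the product bound is vacuous until $|t|$ exceeds some larger $C'/W$, and that intermediate annulus needs its own estimate (each factor $\le\rho<1$ there). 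Both gaps are repairable---for instance, pigeonhole on $|x_i|$ reduces to the equal-$|x_i|$ case where resonances vanish after rescaling, or one handles the $O(q^2)$ resonance windows individually---but the tail is where the work lies, and your sketch does not actually do it. The paper's conditioning-plus-Berry--Esseen argument sidesteps these difficulties entirely.
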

\begin{proof}
We start with a statement of the Berry-Esseen Theorem.

\begin{lemma}[Berry-Esseen Theorem]
Let $X_1,\dots,X_n$ be independent centered random variables with finite third moments $\mathbb{E}[|X_i^3|] = \rho_i$ and let $\sigma^2 = \sum_{i=1}^n \mathbb{E}[X_i^2]$. Then, for any $t > 0$,
\begin{equation}
    \label{eq:berry_esseen}
    \Big| Pr\Big(\sum_{i=1}^n X_i \leq t \Big) - \Phi(t) \Big| \leq C \sigma^{-3} \sum_{i=1}^n \rho_i
\end{equation}
where $C$ is an absolute constant and $\Phi(t)$ is the cumulative distribution function of $\mathcal{N}(0,\sigma^2)$.
\end{lemma}

Since the density of a normal random variable is uniformly bounded by $1/\sqrt{2\pi\sigma^2}$, we obtain the following.
\begin{equation}
    Pr\Big(\Big|\sum_{i=1}^n X_i \Big| \leq t \Big) \leq \frac{2t}{\sqrt{2\pi \sigma^2}} + 2C \sigma^{-3} \sum_{i=1}^n \rho_i
\end{equation}

Let $a^{(n-1)}$ denote the $(a_1,\dots,a_{n-1})$ and similarly, let $x^{(n-1)}$ denote $(x_1,\dots,x_{n-1})$. By the Total Probability Theorem, we have
\begin{align}
    Pr(&a^T x = 0) = Pr\Bigg(\frac{{a^{(n-1)}}^T x^{(n-1)}}{|x_n|} \in \{-W,\dots,W\}\Bigg) \nonumber \\ 
    &Pr\Bigg(a^T x = 0 \Big| \frac{{a^{(n-1)}}^T x^{(n-1)}}{|x_n|} \in \{-W,\dots,W\}\Bigg) \nonumber \\
    & = Pr\Bigg(\frac{{a^{(n-1)}}^T x^{(n-1)}}{|x_n|} \in \{-W,\dots,W\}\Bigg)\frac{1}{2W+1}
\end{align}
where the last line follows from the fact that $Pr(a_n = k)$ for some $k \in \{-W,\dots,W\}$ is $\frac{1}{2W+1}$. The other conditional probability term is $0$ because $Pr(a_n = k)$ for any $k \not\in \{-W,\dots,W\}$ is $0$. We will apply Berry-Esseen Theorem to find an upper bound on the first term.

We note that $\mathbb{E}[a_i^2x_i^2] = x_i^2 \frac{(2W+1)^2-1}{12}$ and $\mathbb{E}[|a_i^3x_i^3|] = |x_i|^3 \frac{2}{2W+1}\Big(\frac{W(W+1)}{2}\Big)^2$. Then,

\begin{align}
    Pr\Bigg(&\Bigg|\frac{{a^{(n-1)}}^T x^{(n-1)}}{|x_n|}\Bigg| \leq W \Bigg) \leq \frac{2W|x_n|}{\sqrt{2\pi}\sqrt{\frac{(2W+1)^2-1}{12}\sum_{i=1}^{n-1} x_i^2}} \nonumber \\ 
    &\hphantom{0000000000} + C \frac{\frac{2}{2W+1}\Big(\frac{W(W+1)}{2}\Big)^2 \sum_{i=1}^{n-1} |x_i|^3}{\Big(\frac{(2W+1)^2-1}{12}\Big)^\frac{3}{2}\Big(\sum_{i=1}^{n-1} x_i^2\Big)^\frac{3}{2}}
\end{align}

One can check that the whole RHS is in the form of $\frac{C'}{\sqrt{n-1}}$ for a constant $C'$ as we assume that $q$ is a constant. We can bound $|x_i|$s in the numerator by $q-1$ and $|x_i|$s in the denominator by $1$. The order of $W$ terms are the same and we can use a limit argument to bound them as well. Therefore, for some $C'' > 0$,

\begin{equation}
    Pr(a^T x = 0) \leq \frac{C'}{\sqrt{n-1}}\frac{1}{2W+1} \leq \frac{C''}{\sqrt{n}W}
\end{equation}
\end{proof}

Lemma \ref{lem:prob} is related to the Littlewood-Offord problem and anti-concentration inequalities are typically used in this framework \cite{halasz1977estimates, rudelson2008littlewood}. We specifically use Berry-Esseen Theorem. Building on this lemma and the union bound, we have the following theorem.

\begin{theorem}
\label{th:rmds}
An $\text{RMDS}_q$ matrix $M \in \mathbb{Z}^{rm \times n}$ with entries in  $\{-W,\dots,W\}$ exists if $m = \Omega(n/\log{n})$ and $W = O(r)$.
\end{theorem}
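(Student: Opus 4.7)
The plan is to apply the probabilistic method. I draw each entry of $M\in\mathbb{Z}^{rm\times n}$ independently and uniformly from $\{-W,\dots,W\}$, with $W=cr$ for a constant $c$ (depending on $q$ and on the constant $C$ appearing in Lemma~\ref{lem:prob}) to be fixed at the end of the argument, and show that the probability that $M$ fails to be $\text{RMDS}_q$ is strictly less than $1$. By definition, $M$ fails to be $\text{RMDS}_q$ if and only if some nontrivial $x\in\{-q+1,\dots,q-1\}^n$ is annihilated by at least $m$ of the $rm$ rows of $M$.

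Fix such an $x$ and let $k\ge 1$ be the number of its nonzero coordinates. Restricting Lemma~\ref{lem:prob} to the $k$-dimensional subvector supported on the nonzero entries of $x$ (zero entries contribute nothing to $a^Tx$) gives $Pr(a^Tx=0)\le C/(\sqrt{k}\,W)$ for a single random row $a$. Since rows are i.i.d., the probability that at least $m$ rows annihilate $x$ is at most
\begin{equation*}
\binom{rm}{m}\Bigl(\tfrac{C}{\sqrt{k}\,W}\Bigr)^{\!m}\;\le\;\Bigl(\tfrac{eC}{\sqrt{k}\,c}\Bigr)^{\!m},
\end{equation*}
using $\binom{rm}{m}\le(er)^m$ and $W=cr$. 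Grouping nontrivial $x$'s by weight $k$ and union-bounding over them yields
\begin{equation*}
Pr(M\text{ is not }\text{RMDS}_q)\;\le\;\sum_{k=1}^n\binom{n}{k}(2q-2)^k\Bigl(\tfrac{eC}{\sqrt{k}\,c}\Bigr)^{\!m}.
\end{equation*}

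I would bound this sum by splitting $k$ into two regimes. For large $k\ge\sqrt{n}$, I use the crude bound $\binom{n}{k}(2q-2)^k\le(2q-1)^n$ together with $k^{-m/2}\le n^{-m/4}$, so every such term is at most $(2q-1)^n(eC/c)^m n^{-m/4}$; choosing the hidden constant in $m=\Omega(n/\log n)$ large enough in terms of $q$ makes $(m/4)\log n$ dominate $n\log(2q-1)$ so that each term decays exponentially in $n$. For small $k<\sqrt n$, I use $\binom{n}{k}\le n^k$, so the contribution is at most $\sum_{k<\sqrt n}(2(q-1)n)^k(eC/c)^m$; since $m\log(c/(eC))=\Omega(n/\log n)$ dominates $\sqrt n\,\log(2(q-1)n)$ for large $n$ and $c>eC$, each such term is also exponentially small. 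Summing the at-most $n$ resulting terms keeps the total probability strictly below $1$, so a realization achieving the $\text{RMDS}_q$ property exists.

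The main obstacle is the dependence of the per-$x$ annihilation probability on the weight $k$: small-weight $x$'s are few but each has large per-row annihilation probability $C/(\sqrt k\,W)$, whereas large-weight $x$'s are many but enjoy stronger anti-concentration via the $\sqrt k$ in the denominator. Both regimes must be controlled simultaneously. This is precisely what forces $W$ to scale linearly in $r$---to absorb the $(er)^m$ factor coming from $\binom{rm}{m}$---and forces the hidden constant in $m=\Omega(n/\log n)$ to be chosen large enough in terms of $q$ so that the $k^{-m/2}$ decay defeats the combinatorial $(2q-1)^n$ blow-up in the large-$k$ regime.
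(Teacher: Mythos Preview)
Your argument is correct and essentially mirrors the paper's own proof: both use the probabilistic method with uniform entries in $\{-W,\dots,W\}$, invoke Lemma~\ref{lem:prob} on the support of $x$, apply $\binom{rm}{m}\le (er)^m$, and split the union bound over $x$ by weight at a threshold $T=n^c$ (you take $c=1/2$; the paper leaves $0<c<1$ generic). The only cosmetic difference is that you phrase failure as ``some $x$ is annihilated by at least $m$ rows'' before union-bounding over which $m$ rows those are, whereas the paper first union-bounds over $m$-row submatrices and then over $x$; the two orderings produce the identical bound $\sum_k \binom{n}{k}(2(q-1))^k\bigl(erC/(\sqrt{k}W)\bigr)^m$.
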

\begin{proof}

Let $A$ be any $m \times n$ submatrix of $M$. Then, by the union bound (as done in \cite{karingula2021singularity}), we have
\begin{align}
    Pr(\text{M is not RMDS}_q) = P &\leq \binom{rm}{m} Pr(\text{A is not EQ}_q) \nonumber \\
    & \leq r^m e^m Pr(\text{A is not EQ}_q)
\end{align}
We again use the union bound to sum over all $x \in \{-q+1,\dots,q-1\}^n \setminus 0$ the event that $Ax = 0$ by counting non-zero entries in $x$ by $k$. Also, notice the independence of the events that $a_i^T x = 0$ for $i \in \{1,\dots,m\}$. Let $x^{(k)}$ denote an arbitrary $x \in \{-q+1,\dots,q-1\}^n \setminus 0$ with $k$ non-zero entries. Then,

\begin{align}
    Pr(\text{A is not EQ}_q) &\leq \sum_{x \in \{-q+1,\dots,q-1\}^n \setminus 0} \prod_{i=1}^m Pr(a_i^Tx = 0) \\
    &\leq \sum_{k=1}^n \binom{n}{k} (2(q-1))^k Pr(a^Tx^{(k)} = 0)^m
\end{align}

We can use Lemma \ref{lem:prob} to bound the $Pr(a^Tx^{(k)} = 0)$ term. Therefore,

\begin{align}
    P &\leq r^m e^m \sum_{k=1}^n \binom{n}{k} (2(q-1))^k \Big(\frac{C}{\sqrt{k}W}\Big)^m \\
    \label{eq:binomial_terms}
    &\hphantom{000}=\sum_{k=1}^{T-1} \binom{n}{k} (2(q-1))^k \Big(\frac{rC_1}{\sqrt{k}W}\Big)^m \nonumber \\
    &\hphantom{000000} + \sum_{k=T}^{n} \binom{n}{k} (2(q-1))^k \Big(\frac{rC_1}{\sqrt{k}W}\Big)^m
\end{align}
for some $C_1, T \in \mathbb{Z}$. We bound the first summation by using $\binom{n}{k}(2(q-1))^k \leq (n(q-1))^k$ and choosing $k = 1$ for the probability term. This gives a geometric sum from $k = 1$ to $k = T-1$. 

\begin{align}
    \sum_{k=1}^{T-1} (n(q-1))^k \Big(\frac{rC_1}{W}\Big)^m = \Big(\frac{rC_1}{W}\Big)^m \Big(\frac{(n(q-1))^T - 1}{n(q-1) - 1} \Big)
\end{align}

For the second term, we take the highest value term $\frac{rC_1}{\sqrt{T}W}$ and $\sum_{k=T}^n \binom{n}{k} 2(q-1)^k \leq \sum_{k=0}^n \binom{n}{k} 2(q-1)^k = (2(q-1) + 1)^n$. Hence,

\begin{align}
    P & \leq \Big(\frac{rC_1}{W}\Big)^m C_2 (n(q-1))^T + (2(q-1)+1)^n \Big(\frac{rC_1}{\sqrt{T}W}\Big)^m \\
    & \leq 2^{C_3 T\log{n(q-1)} - m\log{W/rC_1}} \nonumber \\
    &\hphantom{00000}+ 2^{n\log{(2(q-1)+1)} - m \log{(\sqrt{T}W/rC_1)}}
\end{align}

We take $T = O(n^c)$ for some $0 < c < 1$. It is easy to see that if $m = \Omega(n/\log{n})$ and $W = O(r)$, both terms vanish as $n$ goes to the infinity.

\end{proof}
When $r = 1$, we prove an existential result on EQ matrices already known in \cite{kilic2021neural}. We remark that the proof technique for Theorem \ref{th:rmds} is not powerful enough to obtain non-trivial bounds on $W$ when $m = 1$ to attack Erd\H{o}s' conjecture on the DSS weight sets.

The CRT gives an explicit way to construct $\text{RMDS}_q \in \mathbb{Z}^{rm\times n}$ matrices. We obtain $p_{rm} = O(rm\log{rm}) = O(rn)$ given that $r = O(n^c)$ for some $c > 0$ and $m = O(n/\log{n})$ by the PNT. Therefore, we have a factor of $O(n)$ weight size reduction in Theorem \ref{th:rmds}. However, modular arithmetical properties of the CRT do not reflect to $\text{RMDS}_q$ matrices. Therefore, an $\text{RMDS}_q$ matrix cannot replace a CRT matrix in general (see Appendix).

\section{Applications of the Algebraic Results to Neural Circuits}
\label{sec:neural}

In this section, we will give EQ and COMP threshold circuit constructions. We note that in our analysis, the size of the bias term is ignored.

One can construct a depth-$2$ exact threshold circuit with small weights to compute the EQ function \cite{kilic2021neural}. For the first layer, we select the weights for each exact threshold gate as the rows of the EQ matrix. Then, we connect the outputs of the first layer to the top gate, which just computes the $m$-input AND function (i.e. $\mathds{1}\{z_1 + ... + z_m = m\}$ for $z_i \in \{0,1\}$). In Figure \ref{fig:eq_constr}, we give an example of EQ constructions.

\begin{figure}[h]
    \centering
    \begin{tikzpicture}
    \tikzstyle{sum} = [gate=white,label=center:+]
    \tikzstyle{exa} = [gate=white,label=center:$\mathcal{E}$]
    \node[exa,pin=right:EQ] (out) at (4,-2) {};
    \tikzstyle{input} = [circle]
    \newcommand{\inputnum}{8}
    \newcommand{\domnum}{4}
    \newcommand{\eq}{=}
    \newcommand{\offset}{1}
    \foreach \x in {1,...,\inputnum}
    {   
        \node[input,label=180:$x_\x'$] (i-\x) at (0,-\x*.5) {};
        
    }
    
    \foreach \x in {1,...,\domnum}
    {   
        \node[exa] (e-\x) at (2.5,{-\x*(\offset)+0.25}) {};
        
        \draw (e-\x) -- (out);
    }
    
    \node[input] (bias) at (3.5,-1) {};
    \draw (bias) -- node[right,pos=0.3] {$-4$} (out);
    
    \def\eqmatrixorig{{{1, 1, 1, 1,1, 1,1,0},
                       {1,-1, 1,-1,0, 0,0,1},
                       {1, 1,-1,-1,1,-1,0,0},
                       {1,-1,-1, 1,0, 0,0,0}
                     }}
    \def\eqmatrix{{{1, 1, 1, 1,1, 1,0,1},
                   {1, 0, 1, 0,0,-1,1,-1},
                   {1, 0,-1,-1,1,-1,0,1},
                   {1, 0,-1, 0,0, 1,0,-1}
                   }}     
    \foreach \x in {1,...,\inputnum}
    {
        \foreach \y in {1,...,\domnum}
        {
            \pgfmathtruncatemacro{\val}{\eqmatrix[\y-1][\x-1]}
            \ifnum\val=0{}
            \else{
                \ifnum\val=1{
                    \draw (i-\x) -- (e-\y);
                 }
                \else{
                  \draw[color=red] (i-\x) -- (e-\y);
                 }\fi
            }
            \fi
        }
    }
\end{tikzpicture}
    \caption{An example of $\text{EQ}(X,Y)$ constructions with 8 $x_i' = x_i - y_i$ inputs (or 16 if $x_i$s and $y_i$s are counted separately) and 5 exact threshold gates (including the top gate). The black(or red) edges correspond to the edges with weight 1 (or -1).}
    \label{fig:eq_constr}
\end{figure}
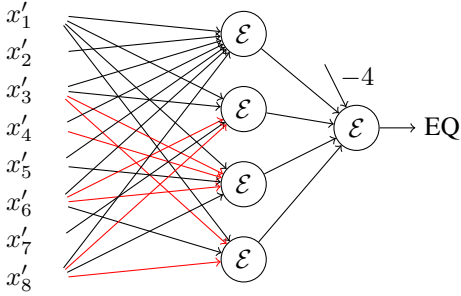

We roughly follow the previous works for the construction of the COMP \cite{amano2005complexity}. First, let us define $F^{(l)}(X,Y) = \sum_{i=l+1}^n 2^{i-l-1} (x_i-y_i)$ so that $\mathds{1}\{F^{(l)}(X,Y) \geq 0\}$ is a $(n-l)$-bit COMP function for $l < n$ and $F^{(0)}(X,Y) = F(X,Y)$. We say that $X \geq Y$ when $F(X,Y) \geq 0$ and vice versa. We also denote by $X^{(l)}$ the most significant $(n-l)$-tuple of an $n$-tuple vector $X$. We have the following observation.

\begin{lemma}
\label{lem:comp}
Let $F^{(l)}(X,Y) = \sum_{i=l+1}^n 2^{i-l-1}(x_i - y_i)$ and $F^{(0)}(X,Y) = F(X,Y)$ for $X, Y \in \{0,1\}^n$. Then,
\begin{align}
    F(X,Y) > 0  &\Leftrightarrow \exists l : F^{(l)}(X,Y) = \hphantom{-}1 \\
    F(X,Y) < 0 &\Leftrightarrow \exists l : F^{(l)}(X,Y) = -1
\end{align}
\end{lemma}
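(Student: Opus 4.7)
The plan is to prove both biconditionals by relating the existence of an $l$ with $F^{(l)}(X,Y) = \pm 1$ to the most-significant bit position where $X$ and $Y$ differ. I will handle the $F(X,Y) > 0$ case in detail; the $F(X,Y) < 0$ case follows by swapping the roles of $X$ and $Y$ (equivalently, negating $F$).

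For the forward direction, assume $F(X,Y) > 0$. Since $X \neq Y$, there exists a largest index $i^{*}$ with $x_{i^{*}} \neq y_{i^{*}}$, and I claim $x_{i^{*}} = 1$, $y_{i^{*}} = 0$. Otherwise, one would have $x_{i^{*}} = 0$, $y_{i^{*}} = 1$, and then since $x_i = y_i$ for all $i > i^{*}$, the difference $F(X,Y) = \sum_{i=1}^{i^{*}} 2^{i-1}(x_i - y_i)$ would satisfy $F(X,Y) \le -2^{i^{*}-1} + \sum_{i=1}^{i^{*}-1} 2^{i-1} = -1 < 0$, contradicting $F(X,Y) > 0$. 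With this, setting $l = i^{*} - 1$ gives $F^{(l)}(X,Y) = \sum_{i=i^{*}}^{n} 2^{i-i^{*}}(x_i - y_i) = (x_{i^{*}} - y_{i^{*}}) + 0 = 1$, since all higher-index terms vanish.

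For the backward direction, suppose $F^{(l)}(X,Y) = 1$ for some $l$. Decompose
\begin{equation}
F(X,Y) = \sum_{i=1}^{l} 2^{i-1}(x_i - y_i) + 2^{l} F^{(l)}(X,Y).
\end{equation}
The first sum has absolute value at most $\sum_{i=1}^{l} 2^{i-1} = 2^{l} - 1$, while the second term equals $2^{l}$. Hence $F(X,Y) \geq 2^{l} - (2^{l} - 1) = 1 > 0$, as required. The $F(X,Y) < 0$ direction is symmetric: a largest differing index $i^{*}$ must satisfy $x_{i^{*}} = 0$, $y_{i^{*}} = 1$, giving $F^{(i^{*}-1)}(X,Y) = -1$, and if any $F^{(l)}(X,Y) = -1$, the same decomposition yields $F(X,Y) \leq -1 < 0$.

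There is no real obstacle here; the argument is elementary and relies only on the geometric-sum bound $\sum_{i=1}^{l} 2^{i-1} < 2^{l}$ together with identifying the most-significant differing bit. The one item worth stating carefully is the boundary case in the bound (strict inequality $2^{l}-1 < 2^{l}$), which is what forces the sign of $F$ to be determined by the sign of the isolated $\pm 1$ contribution from $F^{(l)}$.
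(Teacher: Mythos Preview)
Your proof is correct and follows essentially the same approach as the paper's: both identify the most significant index $i^*$ where $X$ and $Y$ differ and invoke the geometric-sum bound $\sum_{i<i^*} 2^{i-1} < 2^{i^*-1}$ (what the paper calls the \emph{domination property}) to show that the sign of $F(X,Y)$ is determined by $x_{i^*}-y_{i^*}$. You simply spell out the details that the paper leaves implicit.
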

\begin{proof}
It is easy to see that if $X-Y = (0,0,\dots,0,1,\times,\dots,\times)$ where the vector has a number of leading 0s and $\times$s denote any of $\{-1,0,1\}$, we see that $F(X,Y) > 0$ (this is called the \textit{domination property}). Similarly, for $F(X,Y) < 0$, the vector $X-Y$ should have the form $(0,0,\dots,0,-1,\times,\dots,\times)$ and $F(X,Y) = 0$ if and only if $X-Y = (0,\dots,0)$. The converse holds similarly.
\end{proof}

By searching for the $(n-l)$-tuple vectors $(X-Y)^{(l)} = (0,...,0,-1)$ for all $l \in \{0,...,n-1\}$, we can compute the COMP. We claim that if we have an $\text{RMDS}_q$ matrix $A \in \mathbb{Z}^{rm \times n}$, we can detect such vectors by solving $A^{(l)} (X-Y)^{(l)} = -a_{n-l}$ where $A^{(l)}$ is a truncated version of $A$ with the first $n-l$ columns and $a_{n-l}$ is the $(n-l)$-th column. Specifically, we obtain the following:

\begin{align}
    \label{eq:comp_forward}
    X < Y &\Rightarrow \sum_{l=0}^{n-1} \mathds{1}\{A^{(l)}(X-Y)^{(l)} = -a_{n-l}\} \geq rm \\
    \label{eq:comp_backward}
    X \geq Y &\Rightarrow \sum_{l=0}^{n-1} \mathds{1}\{A^{(l)}(X-Y)^{(l)} = -a_{n-l}\} < n(m-1)
\end{align}

Here, the indicator function works row-wise, i.e., we have the output vector $z \in \{0,1\}^{rmn}$ such that $z_{rml + i} = \mathds{1}\{(A^{(l)}(X-Y)^{(l)})_i = -(a_{n-l})_i\}$ for $i \in \{1,\dots,rm\}$ and $l \in \{0,\dots,n-1\}$. We use an $\text{RMDS}_3$ matrix in the construction to map $\{-1,0,1\}^{n-l}$ vectors bijectively to integer vectors with large Hamming distance. Note that each exact threshold function can be replaced by two linear threshold functions and a summation layer (i.e. $\mathds{1}\{F(X) = 0\} = \mathds{1}\{F(X) \geq 0\} + \mathds{1}\{-F(X) \geq 0\} - 1)$ which can be absorbed to the top gate (see Figure \ref{fig:absorption}) . This increases the circuit size only by a factor of two.

\begin{figure}
    \centering
    \begin{tikzpicture}
        \tikzstyle{input} = [circle]
        \tikzstyle{redg} = [gate=white,label=center:$\mathcal{L}\mathcal{T}$,draw=red]
        \tikzstyle{blueg} = [gate=white,label=center:$\mathcal{L}\mathcal{T}$,draw=blue]
        \tikzstyle{sum} = [gate=white,label=center:+]
        \pgfmathsetmacro{\nodenum}{3}
        \pgfmathsetmacro{\offset}{2.75}
        \newcommand{\eq}{=}
    \node[redg,label=45:\textcolor{red}{$\mathds{1}\{F(X) \geq 0\}$}] (ex) at (1.25,-3) {};
    \node at (-.5,-4.9) {$\vdots$};
    \foreach \x in {1,...,\nodenum}
    {   \pgfmathsetmacro{\inputlabels}{\x<\nodenum ? int(\x-1) : "n"}
        \pgfmathsetmacro{\xp}{\x<\nodenum ? \x : "L"}
        \ifnum\x=\nodenum {\node[input,label=180:$x_n$] (xo-\x) at (0,{-\x*0.75-div(\x,\nodenum)*0.75-\offset}) {};}\else{\node[input,label=180:$x_\inputlabels$] (xo-\x) at (0,{-\x*0.75-div(\x,\nodenum)*0.75-\offset}) {};}\fi
        \draw (xo-\x) -- node[above,pos=0.3] {} (ex);
    }
        \pgfmathsetmacro{\offset}{4}
        \node[blueg,label=315:\textcolor{blue}{$\mathds{1}\{F(X) \leq 0\}$}] (ex_neg) at (1.25,-2-\offset) {};
    \foreach \x in {1,...,\nodenum}
    {   \pgfmathsetmacro{\inputlabels}{\x<\nodenum ? int(\x-1) : "n"}
        \pgfmathsetmacro{\xp}{\x<\nodenum ? \x : "L"}
        \draw (xo-\x) -- node[above,pos=0.3] {} (ex_neg);
    }
    \node[sum,pin=0:$\mathds{1}\{F(X) \eq 0\}$] (out) at (3,-4.5) {};
    \node[input] (bias) at (2.75, -3.5) {};
    
    \draw (ex) -- node[above,pos=0.6] {$1$} (out);
    \draw (ex_neg) -- node[above,pos=0.6] {$1$} (out);
    \draw (bias) -- node[above,pos=0.3] {$-1$} (out);
    \end{tikzpicture}
    \caption{A construction of an arbitrary exact threshold function ($\mathds{1}\{F(X) = 0\}$) using two linear threshold functions ($\mathds{1}\{F(X) \geq 0\}$ and $\mathds{1}\{F(X) \leq 0\}$) and a summation node. This summation node can be removed if its output is connected to another gate due to linearity.}
    \label{fig:absorption}
\end{figure}
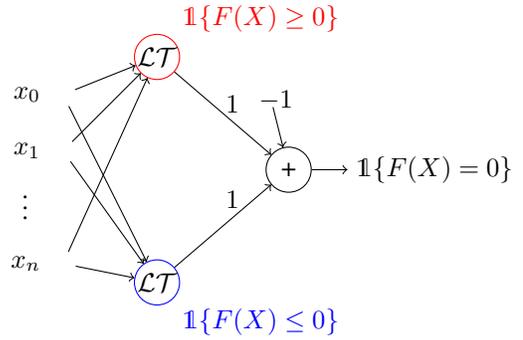

If $F(X,Y) < 0$, $z_{rml +i}$ should be 1 for all $i \in \{1,\dots,rm\}$ for some $l$ by Lemma \ref{lem:comp}. For $F(X,Y) \geq 0$ and any $l$, the maximum number of 1s that can appear in $z_{rml + i}$ is upper bounded by $m-1$ because A is $\text{RMDS}_3$. Therefore, the maximum number of 1s that can appear in $z$ is upper bounded by $n(m-1)$. A sketch of the construction is given in Figure \ref{fig:comp_constr}.
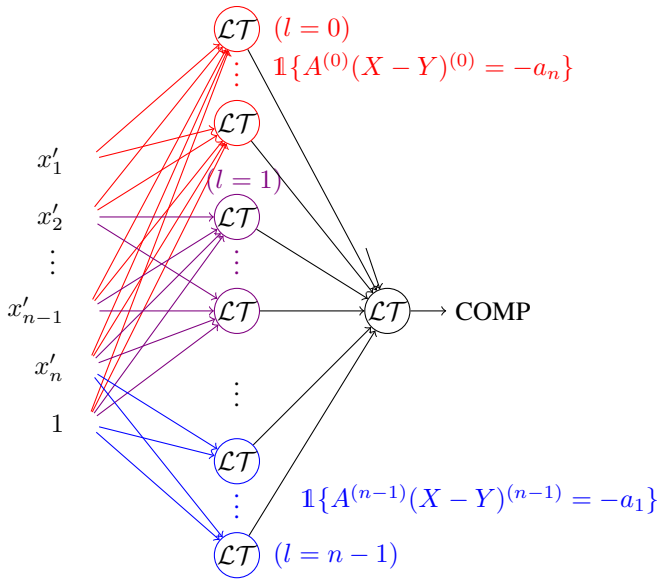
\begin{figure}[h]
    \centering
    \begin{tikzpicture}
    \tikzstyle{sum} = [gate=white,label=center:+]
    \tikzstyle{lt} = [gate=white,label=center:$\mathcal{L}\mathcal{T}$]
    \node[lt,pin=right:COMP] (out) at (4,-2) {};
    \tikzstyle{input} = [circle]
    \newcommand{\inputnum}{5}
    \newcommand{\Lnum}{3}
    \newcommand{\domnum}{2}
    \newcommand{\eq}{=}
    \newcommand{\offset}{1}
    
    \def\colors{red,violet,blue}
    
    \node[input,label=180:$1$] (i-5) at (0,-3.5) {};
    \node[input,label=180:$x_1'$] (i-1) at (0,0) {};
    \node[input,label=180:$x_2'$] (i-2) at (0,-0.75) {};
    \node at (-0.45,-1.25) {$\vdots$};
    \node[input,label=180:$x_{n-1}'$] (i-3) at (0,-2) {};
    \node[input,label=180:$x_n'$] (i-4) at (0,-2.75) {};
    
    \node[input,label=0:\textcolor{red}{$\hphantom{a}(l\eq 0)$}] (l-0) at (2,1.75) {};
    \node[lt,draw=red] (lt-1-1) at (2,1.75) {};
    \node[color=red,label=right:\textcolor{red}{$\hphantom{a}\mathds{1}\{A^{(0)}(X-Y)^{(0)} = -a_n\}$}] at (2,1.25) {$\vdots$};
    \node[lt,draw=red] (lt-1-2) at (2,0.5) {};
    
    \node[input,label=\textcolor{violet}{$\hphantom{a}(l\eq 1)$}] (l-0) at (2,-0.75) {};
    \node[lt,draw=violet] (lt-2-1) at (2,-0.75) {};
    \node[color=violet] at (2,-1.25) {$\vdots$};
    \node[lt,draw=violet] (lt-2-2) at (2,-2) {};
    
    \node at (2, -3) {$\vdots$};
    
    \node[input,label=0:\textcolor{blue}{$\hphantom{a}(l\eq n-1)$}] (l-0) at (2,-5.25) {};
    \node[lt,draw=blue] (lt-3-1) at (2,-4) {};
    \node[color=blue,label=right:\textcolor{blue}{$\hphantom{aaa}\mathds{1}\{A^{(n-1)}(X-Y)^{(n-1)} = -a_1\}$}] at (2,-4.5) {$\vdots$};
    \node[lt,draw=blue] (lt-3-2) at (2,-5.25) {};
    
    \foreach[count=\l] \c in \colors 
    {
        \foreach \y in {1,...,\domnum}
        {
            \foreach \x in {1,...,\inputnum}
            {
                \pgfmathsetmacro{\i}{int(\x + \l - 1 + div(\l,\Lnum)}
                \ifnum\i<6
                {
                    \draw[color=\c] (i-\i) -- (lt-\l-\y) {};
                }
                \else{}
                \fi
            }
            \draw (lt-\l-\y) -- node[midway] {} (out);
        }
    }
    
    \node[input] (bias) at (3.66,-1) {};
    \draw (bias) -- node[right,pos=0.3] {} (out);
\end{tikzpicture}
    \caption{A sketch of the $\text{COMP}(X,Y)$ construction where $x_i' = x_i - y_i$ using linear threshold gates. Each color specifies an $l$ value in the construction. If $X < Y$, all the $rm$ many gates in at least one of the colors will give all $1$s at the output. Otherwise, all the $rm$ many gates in a color will give at most $(m-1)$ many $1$s at the output.}
    \label{fig:comp_constr}
\end{figure}

In order to make both cases separable, we choose $r = n$. At the second layer, the top gate is a MAJORITY gate ($\mathds{1}\{\sum_{i=1}^{rmn} z_i < n(m-1)\} = \mathds{1}\{\sum_{i=1}^{rmn} -z_i \geq -n(m-1)+1\}$). By Theorem \ref{th:rmds}, there exists an $\text{RMDS}_3$ matrix with $m = O(n/\log{n})$ and $W = O(n)$. Thus, there are $rmn + 1 = O(n^3/\log{n})$ many gates in the circuit, which is the best known result. The same size complexity can be achieved by a CRT-based approach with $W = O(n^2)$ \cite{amano2005complexity}.

\section{Conclusion}

We explicitly constructed a rate-efficient constant weight EQ matrix for a previously existential result. Namely, we proved that the CRT matrix is not optimal in weight size to compute the EQ function and obtained the optimal EQ function constructions. For the COMP function, the weight size complexity is improved by a linear factor, using $\text{RMDS}_q$ matrices and their existence. An open problem is whether similar algebraic constructions can be found for general threshold functions so that these ideas can be developed into a weight quantization technique for neural networks.

\section*{Acknowledgement}
This research was partially supported by The Carver Mead New Adventure Fund.

\printbibliography

\appendix

\textbf{The Decoding Algorithm:} Suppose that $A_kx = z$ is given where $A_k$ is an $m\times n$ EQ matrix given in Theorem \ref{th:constr} starting with $A_0 = \begin{bmatrix} 1 \end{bmatrix}$. One can obtain a linear time decoding algorithm in $n$ by exploiting the recursive structure of the construction. Let us partition $x$ and $z$ in the way given in \eqref{eq:constr_partition}. It is clear that $x^{(3)} = [z^{(1)} + z^{(2)}]_2$. Also, after computing $x^{(3)}$, we find that $A_{k-1}x^{(1)} = (z^{(1)} + z^{(2)} - x^{(3)})/2$ and $A_{k-1}x^{(2)} = (z^{(1)} - z^{(2)} - x^{(3)})/2$. These operations can be done in $O(m_{k-1})$ time complexity. Let $T(m)$ denote the time to decode $z \in \mathbb{Z}^m$. Then, $T(m) = 2T(m/2) + O(m)$ and by the Master Theorem, $T(m) = O(m\log{m}) = O(n)$.

\textbf{An Example for the Decoding:} Consider the following system Ax = b:

\begin{equation*}
    \begin{bmatrix*}[r]
        1 &  1 &  1 &  1 &  1 &  1 & 1 & 0 \\
        1 & -1 &  0 &  1 & -1 &  0 & 0 & 1 \\ 
        1 &  1 &  1 & -1 & -1 & -1 & 0 & 0 \\
        1 & -1 &  0 & -1 &  1 &  0 & 0 & 0
    \end{bmatrix*}
    \begin{bmatrix}
        x_1 \\
        x_2 \\
        x_3 \\
        x_4 \\
        x_5 \\
        x_6 \\
        x_7 \\
        x_8
    \end{bmatrix} = \begin{bmatrix*}
        4 \\
        -2 \\
        -1 \\
        0
    \end{bmatrix*}
\end{equation*}

Here is the first step of the algorithm:
\begin{equation*}
    a = \begin{bmatrix} 4 \\ -2\end{bmatrix},\hphantom{a} b = \begin{bmatrix*} -1 \\ 0\end{bmatrix*} \Rightarrow \begin{bmatrix} x_7 \\ x_8 \end{bmatrix} = [a + b]_2 = \begin{bmatrix*}[r] [3]_2 \\ [-2]_2 \end{bmatrix*} = \begin{bmatrix} 1 \\ 0\end{bmatrix}
\end{equation*}
Then, we obtain the two following systems:
\begin{align*}
    &\begin{bmatrix*}[r]
        1 &  1 &  1 \\
        1 & -1 &  0
    \end{bmatrix*}
    \begin{bmatrix}
        x_1 \\
        x_2 \\
        x_3 \\
    \end{bmatrix} = \frac{a + b - \begin{bmatrix} 1 \\ 0 \end{bmatrix}}{2} = \begin{bmatrix*}
        1 \\
        -1
    \end{bmatrix*} \\
    &\begin{bmatrix*}[r]
        1 &  1 &  1 \\
        1 & -1 &  0
    \end{bmatrix*}
    \begin{bmatrix}
        x_4 \\
        x_5 \\
        x_6 \\
    \end{bmatrix} = \frac{a - b - \begin{bmatrix} 1 \\ 0 \end{bmatrix}}{2} = \begin{bmatrix*}
        2 \\
        -1
    \end{bmatrix*}
\end{align*}

For the first system, we find that $x_3 = [1 - 1]_2 = 0$. Then,
\begin{align}
    x_1 = \frac{1 - 1 - 0}{2} = 0 \\
    x_2 = \frac{1 + 1 - 0}{2} = 1
\end{align}

For the second system, we similarly find that $x_6 = [2 - 1]_2 = 1$. Then,
\begin{align}
    x_4 = \frac{2 - 1 - 1}{2} = 0 \\
    x_5 = \frac{2 + 1 - 1}{2} = 1
\end{align}

Thus, we have $x = \begin{bmatrix} 0 & 1 & 0 & 0 & 1 & 1 & 1 & 0\end{bmatrix}^T$.

\textbf{A Modular Arithmetical Property of the CRT Matrix:} Consider the following equation with powers of two in 8 variables and the corresponding $4\times 8$ CRT matrix, say $A$.

\begin{equation}
    w_b^Tx = \sum_{i=1}^8 2^{i-1}x_i = 0
\end{equation}

\begin{align}
&\begin{bmatrix*}[l]
        [2^0]_{3} & [2^1]_{3} & [2^2]_{3} & [2^4]_{3} & [2^5]_{3} & [2^6]_3 & [2^7]_3 \\
        [2^0]_{5} & [2^1]_{5} & [2^2]_{5} & [2^4]_{5} & [2^5]_{5} & [2^6]_5 & [2^7]_5 \\
        [2^0]_{7} & [2^1]_{7} & [2^2]_{7} & [2^4]_{7} & [2^5]_{7} & [2^6]_7 & [2^7]_7 \\
        [2^0]_{11} & [2^1]_{11} & [2^2]_{11} & [2^4]_{11} & [2^5]_{11} & [2^6]_{11} & [2^7]_{11}
        \end{bmatrix*} \\
        &\hphantom{aaaaaaaaaa}= 
        \begin{bmatrix*}[r]
        1 & 2 & 1 & 2 & 1 & 2 & 1 & 2 \\
        1 & 2 & 4 & 3 & 1 & 2 & 4 & 3 \\
        1 & 2 & 4 & 1 & 2 & 4 & 1 & 2 \\
        1 & 2 & 4 & 8 & 5 & 10 & 9 & 7
        \end{bmatrix*}_{4 \times 8}
\end{align}

For the prime $p_i$, the elements in the row $i$ are congruent to the elements of $w_b$. The $i^\text{th}$ element of the $Ax$ vector should be divisible by $p_i$ whenever $w_b^T x = 0$. For instance, one can pick $x = \begin{bmatrix}
    2 & 1 & 1 & 3 & 0 & 1 & -1 & 0
\end{bmatrix}^T$
as a solution for $w_b^Tx = 0$ and we see that $Ax = \begin{bmatrix}
    12 & 15 & 14 & 33
\end{bmatrix}^T = \begin{bmatrix}
    4\cdot 3 & 3\cdot 5 & 2 \cdot 7 & 3 \cdot 11
\end{bmatrix}^T$. This property is essential in the construction of small weight depth-2 circuits for arbitrary threshold functions while the $\text{RMDS}_q$ matrices do not behave in this manner necessarily.
\newpage

\end{document}